\newtheorem{theorem}{Theorem}[section]
\newcommand{\cO}{\mathcal{O}}
\newcommand{\PT}{{\cal PT}}
\newcommand{\ks}{{k_\star}} 
\newcommand{\ri}{\rho_\star} 
\newcommand{\rL}{{\mathrm{L}}}
\newcommand{\rR}{{\mathrm{R}}}
\newcommand{\erf}{{\textrm{erf\,}}}
\newcommand{\sech}{{\textrm{sech\,}}}
\begin{document}

\title[A universal form of complex potentials with spectral singularities]{A universal form of complex potentials with spectral singularities}

\author{Dmitry A. Zezyulin$^{1}$ \& Vladimir V. Konotop$^2$}

\address{$^1$ITMO University, St.~Petersburg 197101, Russia 
	\\
 	$^2$Departamento de F\'isica and Centro de F\'isica Te\'orica e Computacional, Faculdade de Ci\^encias, Universidade de Lisboa, Campo Grande 2, Edif\'icio C8, Lisboa 1749-016, Portugal}
\ead{dzezyulin@itmo.ru}
\vspace{10pt}

\begin{abstract}
 We establish necessary and sufficient conditions for complex potentials in the Schr\"odinger equation to enable spectral singularities (SSs) and show
   that such potentials have the universal form $U(x) = -w^2(x) - iw_x(x) + k_0^2$, where $w(x)$ is a differentiable function,   
 such that $\lim_{x\to \pm \infty}w(x)=\mp k_0$,  and $k_0$ is a nonzero real.
  We also find  that when $k_0$ is a complex number, then   
 the eigenvalue of the corresponding Shr\"odinger operator has an exact solution which, depending on $k_0$, represents a coherent perfect absorber (CPA), laser, a localized bound state, a quasi bound state  in the continuum (a quasi-BIC), or an exceptional point (the latter requiring additional conditions). Thus, $k_0$ is a bifurcation parameter that 
 describes transformations among all those solutions.  Additionally, in a more specific case of a real-valued function $w(x)$ the resulting  potential, although not being $\PT$ symmetric,  can feature a self-dual spectral singularity associated with the CPA-laser operation.  In the space of the system parameters, the transition through each self-dual spectral singularity corresponds to a bifurcation of a pair of complex-conjugate propagation constants from the continuum. The bifurcation of a first complex-conjugate pair corresponds to the phase transition from purely real to complex spectrum.
\end{abstract}

%
\vspace{2pc}
\noindent{\it Keywords}: non-Hermitian potentials, spectral singularities, lasing, coherent perfect  absorption  

%
%
%

\section{Introduction}

Singularities of the spectral characteristics of non-Hermitian operators, alias spectral singularities (SSs), were introduced in mathematical literature more than six decades ago~\cite{SS} and were well studied since then  \cite{SS1,SS2,Guseinov}. Independently on these studies, there were appearing physical examples of absorbers \cite{Khapalyuk,Khapalyuk2,Poladian} (see also \cite{Rosanov}) and lasers~\cite{Poladian} of systems possessing SSs (although without direct reference on the notion of SS). The close relation between SSs, and the physical concepts of coherent perfect absorber (CPA),   laser, and 
zero width resonances were established more recently in a series of theoretical~\cite{Most09,MostMeDe,Longhi10,Chong10} and experimental~\cite{Wan} works (see e.g. \cite{Baranov} for review of physical realizations and applications of CPAs).

While the definition of a SS can be formulated in terms of poles of a truncated resolvent of a non-Hermitian Schr\"odinger operator in any spatial dimension (see e.g. the discussion in \cite{KLV2019} and references therein), in this work we deal only with one-dimensional setting. In this case a convenient description of the SSs can be elaborated in terms of the transfer matrix $M(k)$ depending on the wavenumber $k$, when real zeros of the matrix element $M_{22}(k)$  determine SSs~\cite{Most09,MostMeDe}. This approach indicates that 
for a given localized complex potential the existence of a SS is a delicate property, requiring precise matching of the physical parameters ensuring a real zero of a complex function $M_{22}(k)$.  
Hence a number of free parameters of the potential must be big enough in order to ensure the existence of a SS. In the meantime, it turns out that in spite of these, sometimes severe, constraints, a number of potentials that support SSs with the prescribed properties (these including the wavelength, the order of a SS, the number of SSs, etc.)  is infinitely large and can be constructed in an algorithmic way~\cite{Most14a,Most14b,KLV2019}. The remaining questions, however, are whether the structure of all such complex potentials, supporting SSs,  and whether the field structure corresponding to those potentials  have something in common or not. 

In this paper we give positive answer to both above questions. More specifically, subject to some  quite weak (physically) constraints, we establish necessary (section~\ref{sec:necessary}) and sufficient (section~\ref{sec:sufficient}) conditions on the form of a complex potential   that features SSs.  This universal form reads   $U(x) = -w^2(x) - iw_x(x) + k_0^2$, where $U(x)$ is the complex potential,  $k_0$ is a nonzero real, and $w(x)$ is a  differentiable function which features asymptotic behaviour $\lim_{x\to \pm \infty}w(x)=\mp k_0$ and   satisfies certain additional (not very restrictive) requirements. Our proofs are based on a specific representation of the SS solutions which allows for describing the parametric transformation of the complex potential making it supporting other types of solutions including bound states, quasi-bound states in continuum, and exceptional point solutions (\ref{sec:pot_func_k}).  Furthermore, in a more specific situation of real-valued function $w(x)$ we demonstrate that the found SS solution can coexist with another, self-dual  spectral singularity which corresponds to the combined CPA-laser operation (section~\ref{sec:exmaples}). In concluding section~\ref{sec:concl} we provide an outlook towards possible practical implementation of our findings and   some promising generalizations of the presented theory.
%
 
\section{Universal form of a complex potential resulting in a spectral singularity} 
\label{sec:necessary}
Our main goal is to study some general properties of special solutions of a one-dimensional Schr\"odinger equation
 \begin{eqnarray}
 \label{SE} 
 -\psi^{\prime\prime}+U(x)\psi= k^2\psi, 
 \end{eqnarray}
 where $U(x)$ is a spatially localized complex-valued potential, i.e.
 \begin{eqnarray}
 \label{asympt_U}
  \lim_{x\to\pm\infty }U(x)=0,
 \end{eqnarray}
 and $k$ is a spectral parameter (hereafter a prime stands for a derivative with respect to $x$).  
 More specifically, we are interested in solutions $\psi_0(x)$  of (\ref{SE}) which satisfy the conditions as follows:
 \begin{itemize}
 	\item[(i)]  the function $\psi_0(x)$ is a continuously differentiable, i.e.,  $\psi_0\in C^1(\mathbb{R})$;
 	
 	\item[(ii)] for a given value of the spectral parameter   $k=k_0\in\mathbb{R}$, the function  $\psi_0(x)$ is characterized by the asymptotic behaviour
 	\begin{equation}
 	\label{eq:asnes}
 	\lim_{x\to\pm\infty} [\psi_0^{(n)}(x) - (\pm ik_0)^ne^{\pm ik_0x}\rho_\pm]=0,  \quad \rho_{\pm}\ne 0, \quad n=0,1,2,
 	\end{equation} 
 \end{itemize}
where  the superscript ``${(n)}$'' denotes the $n$th derivative in $x$:  $\psi_0^{(n)}(x)\equiv d^n\psi_0/dx^n$,    and $\rho_\pm\in\mathbb{C}$ are nonzero complex constants.  A solution satisfying the formulated constraints will be referred to as a SS-solution. Requirements  (\ref{eq:asnes}) imply that a SS-solution $\psi_0(x)$ describes a laser for $k_0>0$ and a CPA for $k_0<0$. Taking into that the $2\times 2$ transfer matrix $M(k)$ links the Jost solutions of any localized potential (see e.g.~\cite{Sabatier}), the definition of SS-solution given above is consistent with the standard definition of a SS. Namely, $k_0$ is a SS if $M_{22}(k_0)=0$ (hereafter $M_{ij}(k)$ with $i,j=1,2$ stand for the respective elements of $M(k)$).

Consider some   function $\psi_0(x)$ that has  the above properties (i) and (ii) and  solves equation  (\ref{SE}). Then one can show that subject to some additional conditions such $\psi_0(x)$ requires the potential $U(x)$ to have the asymptotic behavior   (\ref{asympt_U}), as well as to admit a special representation, given by  equation (\ref{Wadati}) below. Starting with  these additional conditions, we assume that $\psi_0(x)$ has only a finite number $N$ of  simple roots  (i.e. roots of multiplicity one) denoted by $x_j$ ($j=1,2,..,N$) and ordered as
\begin{equation}
\label{eq:order}
-\infty < x_1<x_2<\ldots <x_N<\infty .
\end{equation}
Then  it follows from Taylor's theorem that at $x\to x_j$
\begin{eqnarray}
\label{zeros}
\begin{array}{l}
\psi_0(x) = \psi_{0}^{\prime}(x_j)(x-x_j) +o(x-x_j) ,
\quad
\psi_{0}^\prime(x) = \psi_{0}^{\prime}(x_j)  + o(1), 
\\[3mm]
\psi_{0}^{\prime}(x_j) \ne 0.
\end{array}
\end{eqnarray}

For the next consideration, it is convenient to introduce  intervals on the real axis where $\psi_0(x)\neq 0$ 
(see   figure~\ref{fig:proof})
\begin{equation}
\label{eq:III}
I_0 = (-\infty, x_1), \ I_1=(x_1, x_2), \ \ldots \ I_{N-1}=(x_{N-1}, x_N),\ I_N = (x_N, \infty),
\end{equation}
as well as the set $I$ of all point  of the real axis where $\psi_0(x)$  is nonzero: 
\begin{equation}
\label{set_I}
I=I_0\cup I_1\cup\cdots\cup I_N.
\end{equation}
 \begin{figure}
 	\centering
 	\includegraphics[width=0.8\linewidth]{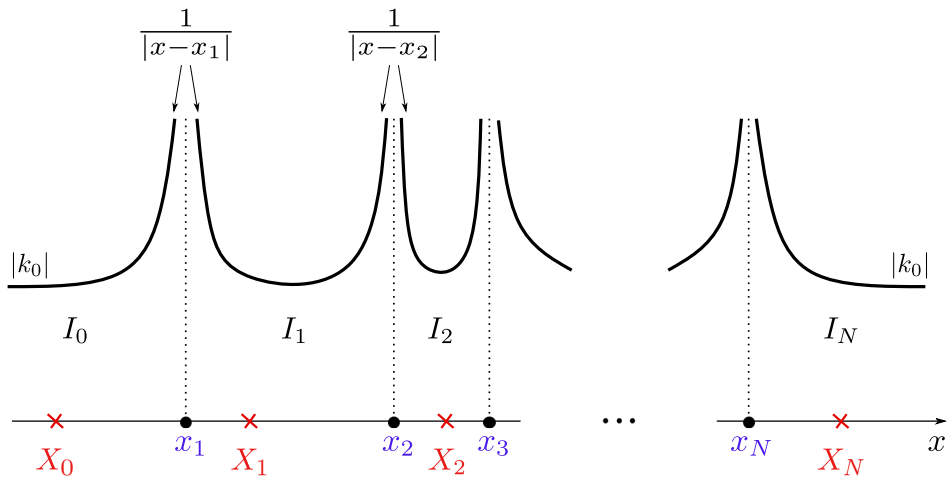}
 	\caption{Illustration to theorems~\ref{theor:necessary} and \ref{theor:sufficient}. Points $x_1<x_2<\ldots< x_N$ divide the real $x$-axis into intervals  $I_0, I_1, \ldots, I_N$. Each interval contains an arbitrarily chosen point $X_0, X_1, \ldots, X_N$. Additionally, a schematics of  singularities and asymptotic behaviour  of   $|w(x)|$ are indicated.}
 	\label{fig:proof}
 \end{figure}	
   
For $x\in I$ one can define the function  
\begin{eqnarray}
\label{w}
w(x) = i\frac{\psi_{0}^{\prime}(x)}{\psi_0(x)}
\end{eqnarray}
which  is continuous in $I$ and has the following additional properties (see   figure~\ref{fig:proof}) 
\begin{eqnarray}
\label{asympt_w}
\lim_{x\to \pm \infty}w(x)=\mp k_0,
\\
\label{singularity_w}
w(x) = \frac{i}{x-x_j} +o\left(\frac{1}{x-x_j} \right)\quad \mbox{as\ } x\to x_j. 
\end{eqnarray}
Additionally,  let us also  assume that  in each interval $I_j$ there exists  the second  derivative $\psi_{0}''(x)$ which is bounded     
in each closed subset of $I$. Then $w(x)$ is differentiable in $I$,  and  its derivative $w'(x)$ is also bounded in each closed subset of $I$. Asymptotic behavior of the SS-solution $\psi_0$ required in (\ref{eq:asnes}) implies that 
\begin{eqnarray}
 \lim_{x\to \pm \infty}w^\prime (x)=0.
\end{eqnarray} 
Then, for $\psi_0(x)$ to be a solution of Schr\"odinger equation (\ref{SE}), the complex potential in this equation must have the form
\begin{equation}
\label{Wadati}
U(x) =  
- w^2(x) - iw^\prime(x)+k_0^2.
\end{equation}
For $k_0=0$ and real-valued function $w(x)$, the potentials of this type  have been  proposed in earlier works~\cite{Andrianov,Wadati08} for obtaining complex potentials with real spectra. It follows from (\ref{eq:asnes}) that  the  potential introduced through (\ref{Wadati}) has zero asymptotic behaviour, as prescribed by (\ref{asympt_U}), and is a bounded function in any closed interval belonging to $I$.


Since the starting point of the above analysis was a SS-solution, now we can formulate a necessary condition for a complex potential $U(x)$ to result in a SS of the respective Schr\"odinger operator. 
 \begin{theorem}[Necessary condition]
 	\label{theor:necessary}
 	Let a function  $\psi_0 \in C^1(\mathbb{R})$  
 	have at most a finite number of simple roots $x_j$ ordered as in (\ref{eq:order}),
 	have asymptotic behaviour (\ref{eq:asnes}), and in the set $I$ defined by (\ref{set_I}) have the second derivative $\psi_0''(x)$ which is bounded in each closed subset of $I$. If   $\psi_0(x)$ solves Schr\"odinger equation~(\ref{SE})  with a complex potential $U(x)$, then  (i) 
 	$U(x)$ is bounded in any closed sub-interval of $I$ and vanishes at $|x|\to\infty$,
 	(ii) for $x\in I$ this potential allows for representation (\ref{Wadati}) with the function $w(x)$ having asymptotics (\ref{asympt_w}), and (iii) the function $w(x)$ is defined by (\ref{w}).   
 \end{theorem}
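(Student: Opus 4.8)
The three assertions essentially repackage the construction carried out in the paragraphs preceding the statement, so my plan is to verify that every step of that construction is legitimate under the stated hypotheses. First I would evaluate the Schr\"odinger equation~(\ref{SE}) at $k=k_0$ and, using that $\psi_0(x)\neq 0$ throughout the set $I$ of~(\ref{set_I}), divide by $\psi_0$ to obtain the pointwise identity $U(x)=k_0^2+\psi_0''(x)/\psi_0(x)$ for every $x\in I$. This already expresses $U$ on $I$ through the given data. Next I would introduce the function $w(x)=i\psi_0'(x)/\psi_0(x)$, which is assertion~(iii); since $\psi_0\in C^1(\mathbb{R})$ is nonvanishing on $I$, this $w$ is well defined and continuous there.

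The algebraic core is a Riccati-type identity. I would differentiate $w$, which is permitted because $\psi_0''$ exists on $I$ and $\psi_0\neq 0$, and compute $-w^2(x)-iw'(x)$ directly; the two contributions proportional to $(\psi_0')^2/\psi_0^2$ cancel and precisely $\psi_0''/\psi_0$ survives. Combined with the formula for $U$ above, this yields representation~(\ref{Wadati}) on $I$, establishing the first half of~(ii). For the asymptotics I would substitute~(\ref{eq:asnes}) with $n=0,1$ into $w$, so that $\psi_0'/\psi_0\to\pm ik_0$ and hence $w\to\mp k_0$, which is~(\ref{asympt_w}) and completes~(ii).

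The most delicate point, and the step I expect to be the main obstacle, is the decay $w'(x)\to 0$ needed for the behaviour of $U$ at infinity, because $w'=i(\psi_0''\psi_0-(\psi_0')^2)/\psi_0^2$ is a quotient in which neither the numerator's individual terms nor the denominator decay. The clean way I would handle this is to write $\psi_0^{(n)}(x)=(\pm ik_0)^n e^{\pm ik_0x}\rho_\pm+\epsilon_n(x)$ with $\epsilon_n(x)\to 0$ for $n=0,1,2$. Inserting this into the numerator, the leading oscillatory products cancel because the coefficient $(\pm ik_0)^2$ enters $\psi_0''\psi_0$ and $(\psi_0')^2$ identically, leaving only terms that are products of the vanishing $\epsilon_n$ with factors of unit modulus; the numerator therefore tends to zero while $|\psi_0|^2\to|\rho_\pm|^2>0$ keeps the denominator bounded away from zero. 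This is exactly where the $n=2$ requirement in~(\ref{eq:asnes}) becomes indispensable.

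Finally, for~(i) I would treat boundedness and decay separately. On a closed sub-interval $[a,b]\subset I$ — which, being connected, lies within a single component $I_j$ and so avoids every root — the function $\psi_0$ is continuous and nonvanishing on a compact set, hence bounded away from zero, while $\psi_0''$ is bounded by hypothesis; the formula $U=k_0^2+\psi_0''/\psi_0$ then gives boundedness of $U$. Decay at infinity follows from~(\ref{Wadati}): as $x\to\pm\infty$ one has $w\to\mp k_0$, so $w^2\to k_0^2$, together with $w'\to 0$ from the preceding paragraph, whence $U\to-k_0^2-0+k_0^2=0$, in agreement with~(\ref{asympt_U}).
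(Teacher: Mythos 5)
Your proposal is correct and follows essentially the same route as the paper: define $w(x)=i\psi_0'(x)/\psi_0(x)$ on $I$, obtain the representation (\ref{Wadati}) from the Riccati-type identity $-w^2-iw'=\psi_0''/\psi_0$, and read off the asymptotics (\ref{asympt_w}) and the decay of $U$ from (\ref{eq:asnes}). In fact you supply details the paper only asserts — notably the explicit cancellation argument showing $w'(x)\to 0$ at infinity, where the $n=2$ condition in (\ref{eq:asnes}) is used — so no gaps remain.
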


We notice that in the isolated points  $x_j\notin I$  the potential $U(x)$ is allowed to have singularities. If however, a function $\psi_{0}$ is different from zero on the whole real axis, i.e.  $\psi(x)\neq 0$ for $x\in\mathbb{R}$, then neither $w(x)$ nor $U(x)$ have singularities. In this last case, solving (\ref{w}) with respect to $\psi_0$, we find  that the solution corresponding to the spectral singularity of potential (\ref{Wadati}) 
  can be expressed directly through the function $w(x)$, as 
  \begin{equation}
  \label{eq:sol_SS}
  \psi_0(x) = \rho \exp\left[ -i \int_{x_0}^x w(\xi)d\xi \right], 
  \end{equation}
where $x_0$ and $\rho$ are  arbitrary constants ($\rho\ne 0$). Recently, some SS-solutions having the from (\ref{eq:sol_SS}) and related to the potential (\ref{Wadati}) were discussed in \cite{Horsley}.
  
The potential $U(x)$ determined in Theorem~\ref{theor:necessary} may have discontinuities. For the sake of illustration, let us show that the known example of a complex rectangular potential~\cite{Most09,KLV2019}: 
\begin{eqnarray}
\label{rect}
U(x)=\left\{
\begin{array}{ll}
k_0^2-\kappa^{2}, & \mbox{at $|x|\leq 1$}
\\
0 & \mbox{at $|x|>1$}
\end{array}
\right. , 
\end{eqnarray}
where $\kappa$ is a complex number, for the parameters enabling a SS-solution can be represented in the from (\ref{Wadati}).
Indeed,  let $k_0$ be a SS. This means that either $\kappa$ is such that for a real $k_0$ we have $k_0=i\kappa\tan\kappa$, and then we  compute the an SS-solution
\begin{equation}
\psi_0(x)=  \left\{ 
\begin{array}{ll}
\kappa  e^{ik_0(|x|-1)} & |x| \geq 1
\\
\kappa\cos(\kappa x)/\cos \kappa  & |x| \leq 1
\end{array}
\right.      
\end{equation}
and respectively
\begin{eqnarray}
w(x)=\left\{  
\begin{array}{ll}
\mp k_0 & \pm x>1
\\
{ -i\kappa\tan(\kappa x)}  & |x| \leq 1
\end{array}
\right.,
\end{eqnarray}
or $k_0=-i\kappa\cot\kappa$ and we obtain an odd SS-solution
\begin{equation}
\label{examp:odd_solut}
\psi_0(x)=  \left\{ 
\begin{array}{ll}
{\pm}\kappa  e^{ ik_0(|x|-1)} & \pm x > 1
\\
\kappa\sin(\kappa x)/\sin \kappa  & |x| \leq 1
\end{array}
\right.  
\end{equation}
with the respective $w(x)$ function
\begin{eqnarray}
\label{examp:odd_w}
w(x)=\left\{  
\begin{array}{ll}
\mp k_0 & \pm x>1
\\
{i\kappa\cot(\kappa x)} & |x| \leq 1
\end{array}
\right. .
\end{eqnarray}
Since the odd solution has zero at $x=0$, the function $w(x)$ has a singularity at this point, which however correspond to a bounded value $U(0)$ of the potential (as this is schematically illustrated in Fig.~\ref{fig:proof}).

\section{Construction of solutions corresponding to spectral singularities. Sufficient condition.}
 \label{sec:sufficient}
 
In the previous section, we have demonstrated that if Schr\"odinger equation (\ref{SE}) has a  SS solution $\psi_0(x)$ obeying certain additional properties, then the potential $U(x)$ in this equation admits a representation in the form  (\ref{Wadati}), where function $w(x)$ can be found from the solution $\psi_0(x)$. In this section we address a converse situation. Suppose that function $w(x)$ is known, and the potential $U(x)$ has the form (\ref{Wadati}). Our goal now is to construct  a SS-solution for such a potential. If such a solution can be found, then the representation (\ref{Wadati}), eventually supplemented with    additional constraints, is also a sufficient condition for the existence of SSs.
   
The starting point of this analysis will be the simple solution (\ref{eq:sol_SS}) which defines a zero-free SS solution supported by a continuous function $w(x)$. In fact, this solution can be generalized  on the    case  when $w(x)$  has discontinuities of a certain type.  As we shall demonstrate below, in this situation   solution $\psi_0(x)$ should be defined piecewise on  each continuity interval of $w(x)$, and the amplitude of $\psi_0(x)$ vanishes at the points of discontinuity.  

Turning to rigorous formulation, we consider a function $w(x)$ which is defined piecewise in intervals $I_j$ ($j=0,1,\ldots N$) introduced in   (\ref{eq:III}) (we notice that $N=0$ corresponds to a continuous function $w(x)$). More specifically, we assume that    (i) for $x\in I_j$, $w(x) \equiv  w_j(x)$, where each function $w_j(x)$ is   continuous and    piecewise differentiable in $I_j$; (ii) in the points $x_j$ the function  $w(x)$ has discontinuities defined by the representations  
 \begin{eqnarray}
 \label{sing_w_gen}
 \begin{array}{l}
  \displaystyle{ w_0(x)  
 	= \frac{i}{x-x_1} + w_1^-(x),}
 \\[4mm]
 	\displaystyle{w_1(x) = \frac{i}{x-x_1} + w_1^+(x) = \frac{i}{x-x_2}+w_2^-(x),}
 	\\
 	\vdots 
 	\\
 	\displaystyle{w_{N-1}(x) = \frac{i}{x-x_{N-1}} + w_{N-1}^+(x) = \frac{i}{x-x_N}+w_N^-(x),}
 	\\[4mm]
 	\displaystyle{w_N(x) 
 	= \frac{i}{x-x_N} + w_N^+(x),}
 	\end{array}
 \end{eqnarray}
 where $w_j^\pm = \cO(1)$ at ${x\to x_j\pm 0}$; and (iii) has the asymptotic behaviour
\begin{eqnarray}
\label{asimp_w_infty}
\begin{array}{ll}
 w_0(x) =k_0 + \tilde{w}_{-\infty}(x), & \tilde{w}_{-\infty}=\cO(|x|^{-2}) \quad\mbox{at $x\to-\infty$},
 \\[2mm]
   w_N(x) 
 = -k_0 +  \tilde{w}_{\infty}(x), & \tilde{w}_{\infty}=\cO(|x|^{-2}) \quad\mbox{at $x\to\infty$}.
\end{array}
\end{eqnarray}
Behaviour of $w(x)$ is illustrated schematically in figure~\ref{fig:proof}.
We also notice that the condition (\ref{sing_w_gen}) defining the singularities is more restrictive than condition  (\ref{singularity_w}) considered previously.

In each interval $I_j$ we choose  an arbitrary point  $X_j\in I_j$, and define a function $\phi_j(x)$  
 \begin{eqnarray}
 \label{psi_j}
  \phi_j(x) :=  \rho_j \exp\left[-i\int_{X_j}^x w_j(\xi)d\xi\right], \qquad x\in I_j,
 \end{eqnarray}
 where $\rho_j$ is a complex constant undefined, so far. 
   One can readily verify  that for the introduced functions $\phi_j(x)$ ($j=1,2,\ldots,N$) the following limits are valid
  \begin{eqnarray}
  \label{limit_x_j}
   \lim_{x\to x_j+0}\phi_j(x)=\lim_{x\to x_j-0}\phi_{j-1}(x)=0.
  \end{eqnarray}
  Indeed, for the left limits in points $x_j$, we compute
  \begin{equation*}
  \lim_{x\to x_j-0}\phi_{j-1}(x) = \rho_{j-1} \lim_{x\to x_j-0}\exp\int_{X_{j-1}}^{x_j}\left[\frac{1}{\xi-x_j}-iw_j^-(\xi)\right]d\xi=0.
  \end{equation*}
  The right limits are verified similarly.
 
 One can also verify the validity of the following limits:
   \begin{eqnarray}
  \label{limit_infinities}
 \lim_{x\to-\infty} [\phi_0(x) - e^{-ik_0x}\rho_-] =  \lim_{x\to\infty} [\phi_N(x) - e^{ik_0x}\rho_+]=0,
  \end{eqnarray}
  where $\rho_\pm$ are   constants given by
  \begin{eqnarray}
   \label{rho_m}
   \rho_- = \rho_0 \exp\left[ik_0X_0 -i\int_{X_0}^{-\infty} \tilde{w}_{-\infty}(\xi)  d\xi\right],
   \\
   \label{rho_p}
   \rho_+ = \rho_N \exp\left[-ik_0X_N -i\int_{X_N}^{\infty} \tilde{w}_{\infty}(\xi)  d\xi\right].
  \end{eqnarray}
  Note that the improper integrals in these formulas converge due to requirements (\ref{asimp_w_infty}).
   

Let us now define the function
\begin{eqnarray}
\label{def_psi_suff}
\psi_0(x):=\left\{
	\begin{array}{ll}
	\phi_j(x) & x\in I_j
	\\
	0 & x\in\{x_1, x_2, \ldots, x_N\}
	\end{array}
	\right. .
\end{eqnarray}
By the above properties of $\phi_j(x)$ the so defined function $\psi_0(x)$ is continuous on the whole real axis and possesses asymptotic behaviour required for  a SS-solution. Hence, for  such a function to represent  a SS-solution, it must be also continuously differentiable on the whole real axis. In the following theorem we prove that this goal can be achieved  by  the  proper choice of 
the parameters $\rho_j$.

 \begin{theorem}[Sufficient condition]
 	\label{theor:sufficient}
 	 Let $X_j\in I_j$ for $j=0,\ldots,N$ and $\rho_0$ be an arbitrary nonzero constant. Define amplitudes $\{\rho_1,\ldots,\rho_N\}$  by the recurrent law as follows ($j=1, 2, \ldots, N$):
 	\begin{eqnarray}
 	\rho_{j} =  \rho_{j-1}\frac{X_j-x_j}{X_{j-1}-x_j} 
 	\exp\left[i\int_{X_j}^{x_j} w_j^+(\xi)d\xi - i\int_{X_{j-1}}^{x_j} w_j^-(\xi)d\xi  \right]. 
 	\label{eq:reccur}
 	\end{eqnarray}
 	Then for the function $\psi_0(x)$ defined by (\ref{def_psi_suff}) the following properties hold:
 	\begin{itemize}
 		\item[(a)]  $\psi_0(x) \in C^1(\mathbb{R})$; 
 		\item[(b)] there exist nonzero constants $\rho_\pm$ such that 
 		\begin{equation}
\lim_{x\to\pm\infty}[\psi_0(x) - \rho_\pm e^{\pm ik_0x}] = 0;
 		\end{equation}
 		\item[(c)] 	for $x\in I_j$,   $\psi_0(x)$ solves the equation  
 		\begin{equation}
 		\label{eq:psi_j}
 			-\psi_{0}''  +  U_j(x) \psi_0 = k_0^2\psi_0
 		\end{equation}
 		where   
 		\begin{equation}
 		U_j(x) = -w^2_j(x) - iw_{j}'(x) + k_0^2.
 		\end{equation}
 	\end{itemize}		
 \end{theorem}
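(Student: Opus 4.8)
The plan is to dispose of parts (c) and (b) by direct substitution and concentrate on the regularity claim (a), whose content lives entirely at the matching points $x_j$. For (c) I would differentiate (\ref{psi_j}) twice: from $\phi_j' = -iw_j\phi_j$ one gets $\phi_j'' = -(iw_j' + w_j^2)\phi_j$, so that $-\phi_j'' + U_j\phi_j$ with $U_j = -w_j^2 - iw_j' + k_0^2$ cancels the $w_j^2$ and $iw_j'$ contributions and returns $k_0^2\phi_j$; thus (\ref{eq:psi_j}) holds on each $I_j$ by inspection. Part (b) is already contained in the limits (\ref{limit_infinities}) together with the explicit constants (\ref{rho_m})--(\ref{rho_p}), whose improper integrals converge by the decay (\ref{asimp_w_infty}); I would simply invoke these, noting that $\rho_\pm\ne 0$ because they are exponentials multiplying the nonzero $\rho_0$ and $\rho_N$.

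The heart of the argument is (a). Continuity of $\psi_0$ on $\mathbb{R}$ is already established: the one-sided limits at each $x_j$ vanish by (\ref{limit_x_j}) and agree with the assigned value $\psi_0(x_j)=0$, while on each open $I_j$ the identity $\phi_j' = -iw_j\phi_j$ with $w_j$ continuous already gives $\phi_j\in C^1(I_j)$. So everything reduces to differentiability of $\psi_0$ at the points $x_j$ together with continuity of the derivative there. The key device is to extract the logarithmic singularity from the exponent of (\ref{psi_j}) using (\ref{sing_w_gen}). Approaching $x_j$ from within $I_j$, where $w_j(\xi)=i/(\xi-x_j)+w_j^+(\xi)$, I would integrate the singular term to obtain $-i\int_{X_j}^x w_j\,d\xi = \ln\frac{x-x_j}{X_j-x_j} - i\int_{X_j}^x w_j^+\,d\xi$, hence the factorisation $\phi_j(x) = \rho_j\frac{x-x_j}{X_j-x_j}\exp[-i\int_{X_j}^x w_j^+\,d\xi]$; the analogous computation from the left, using $w_{j-1}(\xi)=i/(\xi-x_j)+w_j^-(\xi)$ near $x_j$, yields $\phi_{j-1}(x) = \rho_{j-1}\frac{x-x_j}{X_{j-1}-x_j}\exp[-i\int_{X_{j-1}}^x w_j^-\,d\xi]$. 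Both carry the same factor $(x-x_j)$, confirming the linear vanishing.

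Since $\psi_0(x_j)=0$, the one-sided derivatives of $\psi_0$ at $x_j$ are exactly the limits of $\phi_j(x)/(x-x_j)$ and $\phi_{j-1}(x)/(x-x_j)$, which by the factorisations above equal $\frac{\rho_j}{X_j-x_j}\exp[-i\int_{X_j}^{x_j}w_j^+\,d\xi]$ and $\frac{\rho_{j-1}}{X_{j-1}-x_j}\exp[-i\int_{X_{j-1}}^{x_j}w_j^-\,d\xi]$. Equating these two numbers, which is precisely what the existence of a two-sided derivative requires, and solving for $\rho_j$ reproduces exactly the recurrence (\ref{eq:reccur}); this is the structural role of that prescription. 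To promote existence of $\psi_0'(x_j)$ to continuity of $\psi_0'$, I would use $\phi_j'(x) = [1/(x-x_j) - iw_j^+(x)]\phi_j(x) = \phi_j(x)/(x-x_j) - iw_j^+(x)\phi_j(x)$; as $x\to x_j+0$ the last term vanishes (since $w_j^+=\cO(1)$ and $\phi_j\to0$), so $\lim \phi_j'(x)$ coincides with the right difference-quotient derivative, and symmetrically from the left. Thus $\psi_0'$ is continuous at each $x_j$, and (a) follows.

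I expect the only genuine obstacle to be this transition-point analysis, and specifically the sign bookkeeping in extracting $\ln|x-x_j|$: on $I_j$ one has $x>x_j$ while on $I_{j-1}$ one has $x<x_j$, yet in both cases the quotient $(x_j-x)/(x_j-X)$ simplifies to $(x-x_j)/(X-x_j)$, so the factorisation remains uniform. One must also verify that the remainders $w_j^\pm$, being $\cO(1)$ at $x_j$, yield integrals that converge up to $x_j$, so that the limiting exponentials are finite and nonzero; with that in hand the remainder of the proof is routine.
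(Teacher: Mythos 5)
Your proof is correct and takes essentially the same route as the paper's: parts (b) and (c) follow by direct substitution and the previously established limits, while part (a) reduces to matching the one-sided derivatives of $\psi_0$ at each $x_j$, obtained by extracting the logarithmic singularity so that $\phi_j(x)\propto (x-x_j)$ near $x_j$, with the recurrence (\ref{eq:reccur}) being precisely the matching condition. The only difference is one of explicitness: you spell out the factorisation and the step promoting differentiability at $x_j$ to continuity of $\psi_0'$ there (via $\phi_j'=\phi_j/(x-x_j)-iw_j^{+}\phi_j$), both of which the paper's proof uses implicitly.
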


 \begin{proof} Bearing in mind already established properties of the functions $\phi_j(x)$, for the proof of this theorem we only need to verify the continuity of the derivative $\psi_0^\prime$ in $\mathbb{R}$. For $x\in I$ this follows from the definition (\ref{def_psi_suff}), because in each interval $I_j$ we compute
 	\begin{equation}
\psi_0'(x) = -iw_j(x)\phi_j(x), \quad x\in I_j.
 	\end{equation}
 	The $\psi_0'(x)$ is continuous in $I_j$  due to the continuity of $w_j(x)$ and $\phi_j(x)$. Hence, it remains to prove that $\psi_0'(x)$ is continuous in the points $x_j$, $j=1,\ldots, N$. To this end, for each point $x_j$, we compute the left and right derivatives of $\psi_0(x)$:
 	\begin{eqnarray*}
%
 	\psi_0'(x_j-0) 
= -\lim_{\epsilon\to +0}\frac{\psi_0(x_j-\epsilon)}{\epsilon} = \frac{\rho_{j-1}}{X_{j-1}-x_j} \exp\left[-i\int_{X_{j-1}}^{x_j}w_j^-(\xi)d\xi\right]
 	\end{eqnarray*}
 	and
 	\begin{eqnarray*}
 	\psi_0'(x_j+0)
 	= \lim_{\epsilon\to+0}\frac{\psi_0(x_j+\epsilon)}{\epsilon} = \frac{\rho_{j}}{X_{j}-x_j} \exp\left[-i\int_{X_{j}}^{x_j}w_j^+(\xi)d\xi\right].
 	\end{eqnarray*}
 	Thus, if the amplitudes $\rho_{j-1}$ and $\rho_j$ are connected through the relation (\ref{eq:reccur}), then  the derivative of $\psi_0(x)$ is continuous in $\mathbb{R}$.
  \end{proof}
 

  \section{From a spectral singularity to exceptional point and to a bound state in continuum
 }
 \label{sec:pot_func_k}

 As we have demonstrated above, the complex potential $U(x)$ of the form  (\ref{Wadati}) always has a SS solution, provided that the function $w(x)$ satisfies certain conditions. The most important of those conditions,  which can be expressed by equation (\ref{asympt_w}), requires the function $w(x)$ to approach constant values $\mp k_0$ as $x\to\pm \infty$, respectively. In those considerations, it was important that the parameter $k_0$ was real. 
%
Now we relax this requirement and consider physically relevant solutions with a complex $k_0$. In other words, $k_0$ will be considered as a parameter that modifies   the potential in order to obtain a prescribed solution. We note that a similar problem for $\PT-$symmetric potentials was recently addressed in \cite{Ahmed2018} based on the fact that a SS can be viewed as a complex discrete energy corresponding to a solution with outgoing plane wave asymptotics.

For the sake of simplicity, we assume that $w(x)$ is  a continuous function, which allows to use solution  (\ref{eq:sol_SS}) as a starting point. Generalization for discontinuous functions $w(x)$  can be developed straightforwardly following the ideas of section~\ref{sec:sufficient}; it is not presented here. 

Consider a complex-valued continuous function  $w(x)$  which tends to $\mp k_0$ as $x\to\pm \infty$, where $k_0$ is an arbitrary \emph{complex} constant. Then the function    $\psi_0(x)$, defined formally by equation  (\ref{eq:sol_SS}),  solves Schr\"odinger equation (\ref{SE}) with the potential given by (\ref{Wadati}).
Such a solution, however, may be   irrelevant if it does not satisfy physically meaningful boundary conditions (i.e. conditions at $x\to\pm\infty$). 
The asymptotic behavior of $w(x)$ implies that at large $x$  the solution (\ref{eq:sol_SS}) behaves as  $\psi_0 \sim \rho e^{\pm ik_0 x}$   as $x\to \pm \infty$.
 Thus, solutions characterized by amplitudes growing with $|x|$, and hence physically irrelevant correspond to $k_0$ in the lower complex half-plane, i.e., to Im$k_0<0$. Such solutions will not be considered below. For other values of the complex $k_0$  one can distinguish several cases (see the diagram in Fig.~\ref{fig:diag}). 
 \begin{figure}
 	\centering
 	\includegraphics[width=0.8\columnwidth]{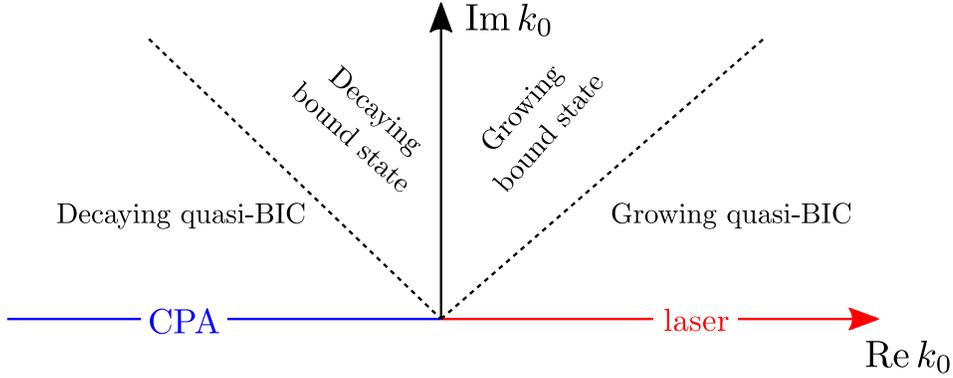}
 	\caption{Diagram on the complex $k_0$-plane  showing different features of the exact solution (\ref{eq:sol_SS}).  }
 	\label{fig:diag}
 \end{figure}

If  $k_0$ is nonzero real (i.e., Im$\,k_0=0$), then the solution (\ref{eq:sol_SS}) describes:
 \begin{itemize} 
 	\item{\em A spectral singularity}   
 	\end{itemize} 
This is the case considered in the previous sections. For positive and negative $k_0$, this solution is laser or  a CPA, respectively.
 
 If  $k_0$ is in the upper complex half-plane (i.e., Im$\,k_0>0$) then the solution (\ref{eq:sol_SS}) describes  a bound state, i.e. satisfies the localization condition
 \begin{eqnarray}
 \lim_{x\to\infty}\psi_0(x) = \lim_{x\to-\infty}\psi_0(x)=0.
 \end{eqnarray}
 Meantime, the nature of such a bound state  can be different and depends on the associated   eigenvalue $k_0^2$:
 \begin{itemize} 
 	
 \item{\em A bound state} occurs if  Re$\, k_0^2<0$, i.e.,  $\arg k_0 \in (\pi/4, 3\pi/4)$, and
	$\displaystyle{\int_{-\infty}^{\infty} \psi_0^2dx\neq 0}$. The respective bound state can be 
	\begin{itemize}
		\item{\em stationary}, if  Re$\, k_0= 0$,
		\item{\em growing} if   Re$\, k_0>  0$,  
		\item{\em decaying} if   Re$\, k_0 < 0$. 
	\end{itemize}
All these bound states are characterized by the real part the eigenvalue $k_0^2$ outside of continuum. 

\item{\em A bound state  with the real part in continuum ({\em quasi-}BIC)} takes place if the real part of eigenvalue $k_0^2$ is positive, i.e. lies in the continuous spectrum. In terms of the complex parameter $k_0$, this correspond to $\arg k_0 \in (0, \pi/4)\cup (3\pi/4, \pi)$. For previous discussion of solutions of these types see e.g.~\cite{BIC_optics1,BIC_optics2}.

\item{\em Exceptional point (EP) }~\cite{Kato} is found if the  quasi-self-orthogonality condition holds, i.e.,  
$\displaystyle{\int_{-\infty}^{\infty} \psi_0^2dx= 0}$~\cite{Moiseev}. An EP-solution can also be stationary, growing or decaying, what is determined by the real part of $k_0$ as specified above. It also can be either a bound state with a real part out of continuum or a quasi-BIC (i.e., having the real part in the continuum.)


			
 	
 \end{itemize} 
 Notice that the above classification of growing and decaying bound states stems from the physical interpretation of the stationary Schr\"odinger equation~(\ref{SE}) as a reduced form of the (dimensionless) wave equation
 \begin{eqnarray}
 \label{wave}
- \frac{\partial^2\psi}{\partial x^2}+U(x)\psi=-\frac{\partial^2 \psi}{\partial t^2}
 \end{eqnarray}	
 after the ansatz $\psi\propto e^{-i\omega t}$, with $\omega^2=k^2$ and Re$\,\omega>0$, when growing with time solution corresponds to Im$\,\omega>0$; or alternatively as the optical parabolic approximation (or equivalently as time dependent Schr\"odinger equation with $t$ replaced by dimensionless propagation distance $z$)
 \begin{eqnarray}
 \label{parabolic}
 - \frac{\partial^2\psi}{\partial x^2}+U(x)\psi=i\frac{\partial \psi}{\partial z}
 \end{eqnarray}	
 after the ansatz $\psi\propto e^{ibz}$, where the propagation constant can be computed as $b=-k_0^2$. Respectively,   bound states    have  Re$\,b>0$, and    growing (decaying) with $z$  solution corresponds to Im$\,b<0$ (Im$\,b>0$).
 
 Thus considering  a family of  potentials of the form (\ref{Wadati}), where $w(x)$ is fixed and $k_0$ is changing in the complex plane, one   obtains a specific complex potential having solutions in a form of a SS or in a form of  bound state with desired properties. The only exception of this rule is the parametric transition between a bound state and an EP-solution (where two bound states coalesce), because for such a transition to occur the form of $w(x)$ itself might need to be changed, and hence an additional free parameter may be needed. As a simple example illustrating this parametric dependence we consider $k_0= i$ and the complex-valued function $w(x)$ defined as
 \begin{equation}
 \label{example}
 w(x;z) = -i\tanh x +  {z}\sech x,
 \end{equation}
 where $z$ is an additional complex parameter. 
 Using  the explicit expression  (\ref{eq:sol_SS}), with $x_0=0$,  we obtain the exact bound state  solution
 \begin{equation}
 \label{eq:sech}
 \psi_0(x;z) =  \exp\left\{\frac{iz}{2}\left({\pi}  - 4\arctan(\exp x)\right)\right\}\textrm{sech\,} x.
 \end{equation}
 Now it is straightforward to compute  
 \begin{equation}
 \int_{-\infty}^{\infty}\psi_0^2(x;z) dx =  \frac{2\cos(\pi z)}{1-4z^2}.
 \end{equation}
 This integral becomes zero at   $z=z_n =\pm(n+1/2)$ and $n=1,2,\ldots$. Thus, $\psi_0(x; z_n)$ given by (\ref{eq:sech}) corresponds to the EP of Eq.~(\ref{SE})  with the potential $U(x)$ generated by $w(x; z_n)$.  The resulting complex potential computed according to (\ref{Wadati}) reads 
 \begin{equation}
  U(x)= -(z^2+2)\sech^2x + 3iz\tanh x\, \sech x.
 \end{equation}
For real values of the parameter $z$ the obtained potential  is $\PT$-symmetric, i.e.  satisfies the property $U(x)=U^*(-x)$ and  belongs to the family of $\PT$-symmetric Scarff~II potentials~\cite{Scarf-2,Scarf1,Scarf2}. 

\section{Self-dual spectral singularities for  complex potentials (\ref{Wadati}) with real-valued $w(x)$. Examples of SS-solutions.}  
\label{sec:exmaples} 
 
\subsection{Transfer matrix approach and basic properties}
\label{sec:scatt_matrix}

Let us now consider the obtained results on the existence of a SS solution and specific form of the potential in more conventional terms of the zeros of the transfer matrix. For this sake it will be convenient to rewrite the scattering problem (\ref{SE}) using the notation for the Hamiltonian operator $H$:
  \begin{equation}
 \label{eq:Sch}
 H\psi = k^2 \psi, \qquad 
 H = -\partial_{xx} + U(x)
 \end{equation}
 and rewrite the Schr\"odinger equation (\ref{SE}) in the form of eigenvalue problem $H\psi = k^2\psi$. 
 
 In this section, we consider only real valued functions $w(x)$ which, as above, satisfy the boundary conditions $w(x)\to\mp k_0$ as $x\to\pm\infty$. Respectively, the constant $k_0$ is also real in this section. In this case  the Hamiltonian $H$ admits an important additional symmetry which is expressed by the relation \cite{NY16}
 \begin{equation}
 \label{eq:pseudo}
 \eta H = H^*\eta, \quad \eta = \partial_x  + iw(x),
 \end{equation}
 which is close to the property of the pseudo-Hermiticity \cite{Wick,Mostafazadeh2002,Solombrino}.
 This implies that for any solution $\psi$ with a real   $k$ one can construct another solution with the same $k$ in the form $(\eta \psi)^*$.  
 
 For the scattering problem  (\ref{eq:Sch})  
 one can  introduce a pair of left (superscript ``$\rL$'') and right (superscript ``$\rR$'') Jost solutions  which for real $k$  are defined uniquely by their asymptotics 
 \begin{eqnarray}
 \label{Jost}
 \begin{array}{ccc}
 \phi_1^{\rL}(x;k)\to e^{ikx},\qquad & \phi_2^{\rL}(x;k)\to e^{-ikx} & \mbox{\quad at $x\to-\infty$,} 
 \\[2mm]
 \phi_1^{\rR}(x;k)\to e^{ikx}, \qquad & \phi_2^{\rR}(x;k)\to e^{-ikx} & \mbox{\quad at $x\to+\infty$.} 
 \end{array}
 \end{eqnarray}
 One can also introduce the $2\times 2$ transfer matrix $M = M(k)$ which connects the left and right Jost solutions as follows:
 \begin{eqnarray}
 \label{M}
 \phi^\rL_1 = M_{11}(k)\phi_1^\rR + M_{21}(k)\phi_2^\rR, \qquad \phi^\rL_2 = M_{12}(k)\phi_1^\rR + M_{22}(k)\phi_2^\rR.
 \end{eqnarray} 
 Any solution $\psi(x)$ of (\ref{eq:Sch}) with  real $k$  is a linear combination of left and right Jost solutions with some coefficients $a^{L,R}$, $b^{L,R}$:
 \begin{equation}
 \psi(x) = a^\rL \phi_1^\rL + b^\rL \phi_2^\rL = a^\rR \phi_1^\rR + b^\rR \phi_2^\rR,
 \end{equation}
 The relation among the coefficients is given by the transfer matrix
 \begin{equation}
 \left(\begin{array}{c}
 a^\rR\\b_\rR
 \end{array}\right) = \left(\begin{array}{cc}
 M_{11}&M_{12}\\
 M_{21}&M_{22}
 \end{array}\right) \left(\begin{array}{c}
 a^\rL\\b_\rL
 \end{array}\right).
 \end{equation}
 Thus a positive (negative) zero $k_\star$ of matrix element $M_{22}$ corresponds to a spectral singularity describing the laser (CPA) solution. Conversely, a positive (negative) zero  $k_\star$ of $M_{11}$  describes a CPA (laser).  
 
 While the definition of  transfer matrix introduced above is valid for any scattering potential $U(x)$, the peculiar  property (\ref{eq:pseudo}) imposes   additional relations among the transfer matrix elements. Indeed, let us now apply the $\eta$ operator each Jost solution in (\ref{Jost}).   Using that  for large $x$ the action of   $\eta$   can be approximated by  $\lim_{x\to\pm\infty } \eta = \partial_{x} \mp  ik_0$ and also using the fact that the Jost solutions with real $k$ are defined uniquely by their asymptotic behaviours, we deduce the following relations between the Jost solutions:
 \begin{eqnarray}
 \begin{array}{ll}
(\eta\phi_1^\rL)^* = -i(k+k_0)\phi_2^\rL, & (\eta\phi_1^\rR)^* = -i(k-k_0)\phi_2^\rR,\\[4mm]
(\eta\phi_2^\rL)^* = i(k-k_0)\phi_1^\rL, & (\eta\phi_2^\rR)^* = i(k+k_0)\phi_1^\rR.
\end{array}
\end{eqnarray}
Using these relations together with  (\ref{M}), we obtain that   for all real $k$ the transfer matrix elements are connected as 
 \begin{equation}
 \label{eq:MMMM}
 \begin{array}{c}
 \displaystyle
 M_{11}(k) = M_{22}^*(k)\frac{k+k_0}{k-k_0}, \quad M_{22}(k) = M_{11}^*(k)\frac{k-k_0}{k_0 + k},\\[6mm]
 \displaystyle
 M_{12}(k) = -M_{21}^*(k), \quad M_{21}(k) = -M_{12}^*(k).
 \end{array}
 \end{equation}
 
 To be specific, below we consider the case $k_0>0$. Then we readily conclude that  $M_{22}(k_0) = 0$, which  corresponds to a laser emitting at wavenumber $k=k_0$. This laser solution obviously recovers the already known exact   solution $\psi_0(x)$ given by the explicit expression (\ref{eq:sol_SS}).
 If at some $k_\star$ the two conditions $M_{22}(\ks) = 0$ and $M_{11}(\ks) = 0$ are verified simultaneously. This situation is typical for, although not limited to, $\PT-$symmetric systems~\cite{Longhi10} and such SSs at $\ks$ are sometimes referred to as self-dual~\cite{Most12}. Now one can verify the  spectral singularity at $k_0$ determined by the solution (\ref{eq:sol_SS}) is generically {\em non}-self-dual, i.e. in a general case $M_{11}(k_0)$ is nonzero. Indeed, evaluating $M_{11}(k_0)$ according to 
 	 L'H\^opital's rule, we obtain
 	\begin{equation}
 	M_{11}(k_0) = 2k_0 \frac{d M_{22}^*(k_0)}{dk_0}.
 	\end{equation} 
 	Thus if the spectral singularity $k_0$ is of the first order, i.e., if $M_{22}(k_0)=0$ and $dM_{22}(k_0)/dk\ne 0$,   then $M_{11}(k_0)\neq 0$ and the SS $k_0$ is non-self-dual.  This leads us to the following  necessary and sufficient condition: {\em the SS $k_0$ corresponding to solution (\ref{eq:sol_SS}) is self-dual, iff it is a zero of  the second or higher order of one of diagonal elements of the transfer matrix, i.e., iff the condition one of the conditions $M_{22}(k_0)=dM_{22}(k_0)/dk=0$ or $M_{11}(k_0)=dM_{11}(k_0)/dk=0$ is verified.}
 	  	
 	  	The latter observation is particularly curious in view of the fact that if there is a SS $\ks$ different from $\pm k_0$, it is always self-dual. Indeed it follows from (\ref{eq:MMMM}) that if $M_{22}(k_\star)=0$ with $k_\star\neq \pm k_0$,  then $M_{22}(k_\star)=0$, as well. Thus, for the class of complex potentials under consideration an interesting situation is possible when the potential features an ordinary (non-self-dual) SS $k_0$ corresponding to the exact solution (\ref{eq:sol_SS}) and, at the same time, has a self-dual SS  at  $k_\star \ne k_0$.

 	
 



A representative feature of self-dual SSs for  potentials (\ref{Wadati}) with real $w(x)$  is that the amplitudes of the corresponding CPA and laser solutions, denoted below by $\psi_\mathrm{las}(x)$ and $\psi_\mathrm{CPA}(x)$, respectively, are connected by a simple algebraic relation. Indeed, let  $\ks\ne \pm k_0$ be a self-dual SS. Without loss of generality we can  normalize those solutions  such that $|\psi_\mathrm{las}(x)|^2\to \ri^2$ and  $|\psi_\mathrm{CPA}(x)|^2\to \ri^2$ as $x\to\pm\infty$, where $\rho_\star$ is a positive constant. Then one can show that  the laser and CPA  amplitudes are related as 
 \begin{equation}
 \label{eq:cons}
 |\psi_\mathrm{CPA}|^2 +  |\psi_\mathrm{las}|^2  + \frac{k_0}{\ks} (|\psi_\mathrm{las}|^2 - |\psi_\mathrm{CPA}|^2) = 2\ri^2.
 \end{equation}
 In order to prove this identity, let us  again return from  the scattering problem in the form (\ref{eq:Sch}),   to  the  Schr\"odinger equation (\ref{SE}) with potential (\ref{Wadati}).  It is straightforward to check that  \emph{any} solution $\psi(x)$  of  equation (\ref{SE})  with real $k^2$ satisfies the identity (``the conservation law'') \cite{KZ14} 
 \begin{equation}
 \label{eq:I}
 |\eta \psi|^2 + (k^2-k_0^2) |\psi|^2  = \textrm{const}.
 \end{equation}
 Let $\psi$   in (\ref{eq:I})  be  the laser solution $\psi_\mathrm{las}(x)$ and $k$ in (\ref{eq:I}) be the corresponding wavenumber $\ks$. Due to the symmetry (\ref{eq:pseudo}) a new function $\phi(x) = (\eta \psi_\mathrm{las})^*$ is also a solution of  Schr\"odinger equation (\ref{SE}) with the same $\ks$.   Asymptotic behavior of $\phi(x)$   indicates that  the latter is a CPA solution, and therefore  it  is proportional to the  normalized CPA solution $\psi_\mathrm{CPA}(x)$.  Comparing  the amplitudes of  both solutions,   we obtain   
 \begin{equation}
 \label{eq:aux}
 |\psi_\mathrm{CPA}|^2 = \frac{|\eta \psi_\mathrm{las}|^2}{(\ks-k_0)^2}.
 \end{equation}
 Combining (\ref{eq:I}) and (\ref{eq:aux}), we obtain  (\ref{eq:cons}).

 \subsection{Numerical study of spectral singularities}
 
Now we present several numerical examples illustrating the discussed self-dual SSs. To this end we consider $w(x)$ to be an odd real-valued function which satisfies the boundary conditions (\ref{asympt_w}). The respective potential $U(x)$ in the form (\ref{Wadati}) is an even function (i.e., it is not $\PT$ symmetric). It is clear that if $w(x)$ ia a monotonous function
then no self-dual SSs can exist, because in this case the derivative $w'(x)$ is sign-definite, meaning that there is either only gain (then CPA is impossible) or only loss (then laser is impossible). Therefore, in order to have a self-dual SS and the corresponding CPA-laser, one needs to employ more sophisticated nonmonotonic functions      $w(x)$.
 
 
 As a  representative example, here we consider
 \begin{equation}
 \label{eq:erf}
 w(x) = -k_0\erf (x)\left(1 - \gamma e^{-x^2}\right), \qquad k_0=1,
 \end{equation}
 where $\erf(x)$ is the error function and $\gamma$ is a real parameter that tunes the shape of function $w(x)$. For $\gamma=0$, the potential $U(x)$ generated according to (\ref{Wadati})     corresponds to amplifying media on the whole space.
However, for sufficiently large positive $\gamma$,    the resulting  potential $U(x)$   corresponds to spatially localized absorbing region  sandwiched  between two  amplifying domains [see Fig.~\ref{fig:erf}(a,b) below for  representative plots of the resulting complex potential]. It is interesting to notice that 
the quantity
 \begin{equation}
 \Gamma=w(-\infty)-w(\infty)=\mbox{Im}\,\int_{-\infty}^{\infty}U(x)dx
 \end{equation}
 describes the balance of the total energy either pumped into, $\Gamma>0$, or absorbed by $\Gamma<0$, the system. In our case $\Gamma=2k_0$, i.e., it is fixed  by  $k_0$, and thus does not depend on the value $\gamma$ controlling distribution of the energy in space.  
 
 In contrast to the exact laser solution $\psi_0(x)$ given by (\ref{eq:sol_SS}) and   existing for any $\gamma$ at $k=k_0$,   eventual self-dual SSs can be found only for isolated values of $\gamma$. Obtaining a self-dual SS is reduced to numerical solution of a system of two equations 
 \begin{equation}
 \mathrm{Re}\,M_{22}(k)=0, \qquad \mathrm{Im}\,M_{22}(k)=0,\qquad k\neq k_0,
 \end{equation}
 with respect to two real variables: $\gamma$ and $k$. 
 
  \begin{figure}
 	\centering
 	\includegraphics[width=0.6\columnwidth]{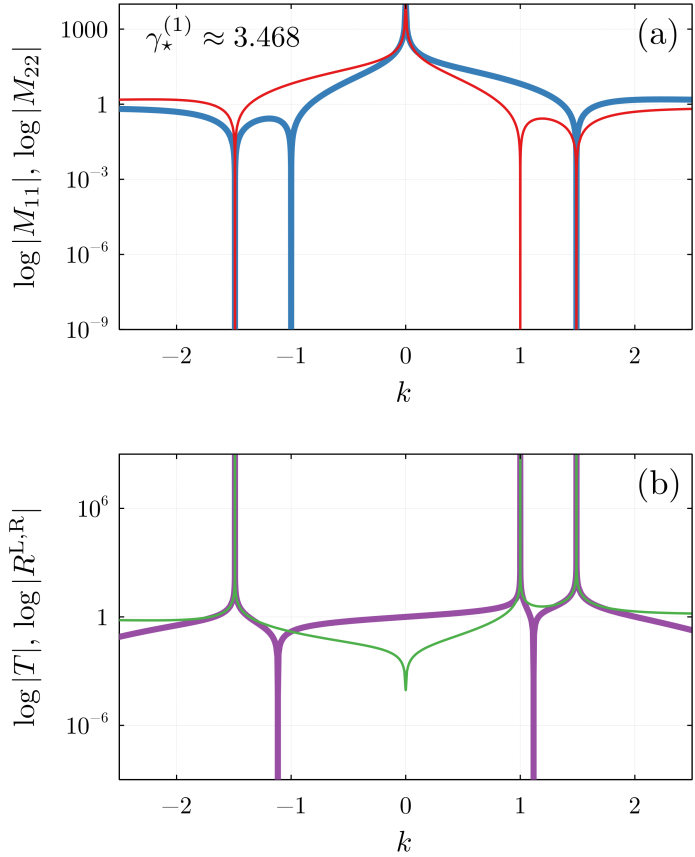}
 	\caption{(a) Logarithmic plots of the absolute values $|M_{11}(k)|$ (bold blue curve) and $|M_{22}(k)|$ (thin red curve) for the potential $U(x)$ generated by function (\ref{eq:erf}) with $\gamma=\gamma_\star^{(1)}\approx 3.468$.  The function $w(x)$ and its derivative, as well as laser and CPA solutions corresponding to the self-dual spectral singularity are shown in Fig.~\ref{fig:erf}(a-f) below. (b) Logarithmic plots of magnitudes of reflection $|R^\rL|=|R^\rR|$ (bold magenta curve) and  transmission $|T|$ (thin green  curve) coefficients computed from the transfer matrix elements plotted in (a).}
 	\label{fig:M11M22}
 \end{figure}
 
Computing numerically the Jost solutions and the elements of the transfer matrix,  we tune   the free parameter $\gamma$ and wavevector $k$ in order to reach a self-dual SS. The smallest positive value of $\gamma$ that enables a self-dual SS is  $\gamma = \gamma_\star^{(1)}\approx 3.468$.
 Logarithmic plots  of  diagonal transfer matrix  elements at $ \gamma_\star^{(1)}$ are shown in figure~\ref{fig:M11M22}(a), where two spectral singularities at different values of $k$ are well visible. The first spectral singularity   corresponds to the zero of $M_{22}(k)$ at $k=k_0=1$ and, equivalently, to the zero of  $M_{11}(k)$ at $k=-k_0=-1$ and   recovers the exact laser solution $\psi_0$  given by (\ref{eq:sol_SS}). Notice that another diagonal element is not zero for this spectral singularity: $M_{11}(k_0) \ne 0$ and $M_{22}(-k_0) \ne 0$. Thus this spectral singularity is not self-dual. The second spectral singularity  occurs  at   $k_\star^{(1)} \approx \pm 1.490$ and is self-dual, since both diagonal elements are zero at this wavevector: $M_{11}(\pm k_\star^{(1)} ) = M_{22}(\pm k_\star^{(1)} )=0$.

\begin{figure}
	\centering
	\includegraphics[width=0.62\columnwidth]{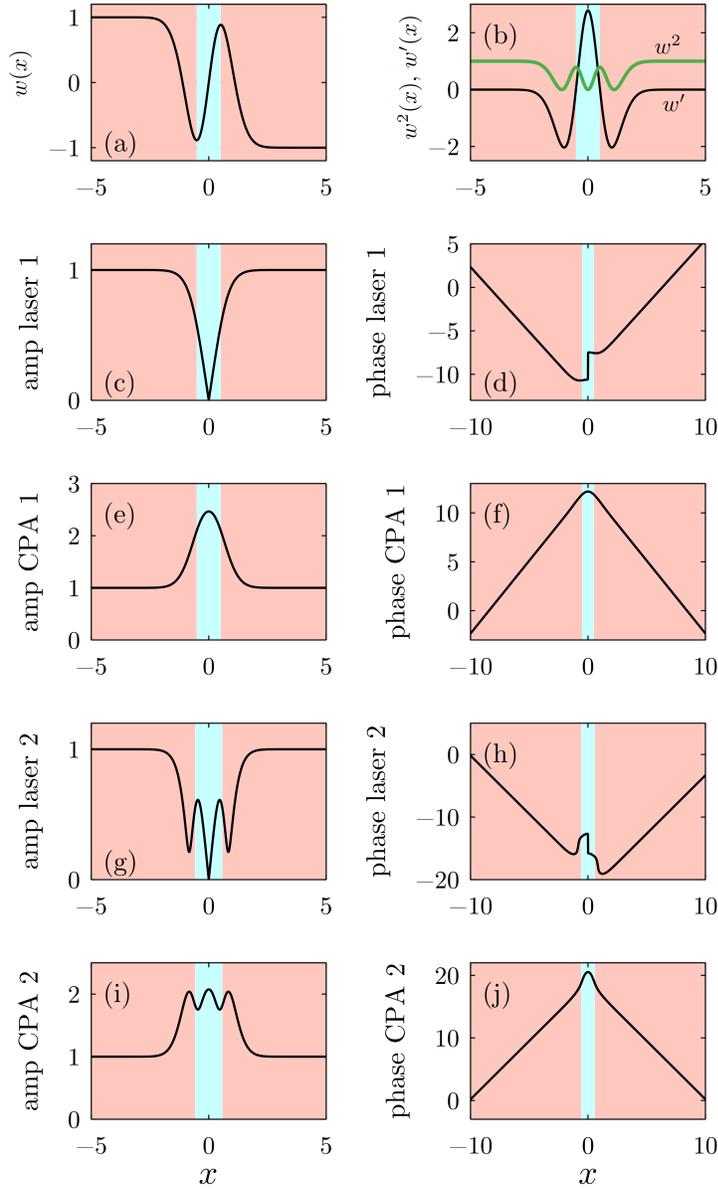}
	\caption{Upper row: plots of $w(x)$ and $w^2(x),\,  w'(x)$ given by (\ref{eq:erf}) with $\gamma = \gamma_\star^{(1)}\approx 3.468$. Spatial domains with energy loss and gain  are filled with  cyan  and  pink   colours, respectively. Second and third rows:  amplitude and phase of coexisting laser and CPA  solutions corresponding to the self-dual SS at $\gamma_\star^{(1)}$ and  $k_\star^{(1)}$. Two bottom rows show the coexisting laser and CPA solutions    for the next self-dual SS  at $\gamma_\star^{(2)}$ and  $k_\star^{(2)}$. }
	\label{fig:erf}
\end{figure} 
 
CPA and laser solutions coexisting at the self-dual SS  are shown in Fig.~\ref{fig:erf}(c,d,e,f). The laser solution is an odd function of $x$  (hence its phase undergoes a $\pi$ jump at $x=0$), whereas the CPA solution is an even function of $x$ and has the intensity peak at $x=0$. 
 Amplitudes of the coexisting laser and CPA solutions shown in  Fig.~\ref{fig:erf}   are connected through identity  (\ref{eq:cons}), where the normalization constant is fixed as $\ri=1$. Interestingly, the amplitude of the CPA solution in the central region is larger than the amplitude of   background  radiation. This  highlights the fact that   enhanced absorption results from the constructive interference of coherent scattered waves in the central, i.e. absorbing domain (notice that similar type of behavior was observed in the recent experiments~\cite{experiment}). Respectively, the laser solution corresponds to the destructive interference in the central region where the solution  amplitude has a node. %

Further, we use transfer matrix in order to evaluate the transmission coefficient $T(k)$ 
and left and right reflection coefficients, $R^\rL(k)$ and   $R^\rR(k)$, using the standard formulas
\begin{equation}
T = \frac{1}{M_{22}}, \qquad R^\rL = -\frac{M_{21}}{M_{22}}, \qquad R^\rR = \frac{M_{12}}{M_{22}}.
\end{equation}
In the case at hand  the left and right transmission coefficients coincide \cite{Most09}, i.e. $T:= T^\rL = T^\rR$.
The amplitudes of obtained scattering coefficients are plotted in figure~\ref{fig:M11M22}(b); notice that in view of relations (\ref{eq:MMMM})  $|R^\rL(k)|\equiv |R^\rR(k)|$, i.e. the left and right reflection coefficients differ only by phases.  The difference between the two coexisting SSs  becomes evident if one compares the behaviour  of the scattering coefficients  for positive and negative values of $k$. For the self-dual SS, all three scattering coefficients diverge at  $ k_\star^{(1)}$ and $ -k_\star^{(1)}$ producing two peaks of infinite height. The ordinary (non-self-dual) SS at $k=\pm k_0$ produces a single peak at $k=k_0$ without a twin in the negative $k$-half-axis. Additionally, in figure we observe that the potential becomes bidirectionally reflectionless at $k\approx \pm 1.116$ where both reflection coefficients vanish simultaneously.

New self-dual SSs can be found for larger values of the parameter $\gamma$ in (\ref{eq:erf}). For instance, upon increasing $\gamma$ the next spectral singularity occurs at $\gamma^{(2)}_\star \approx 11.082$ and $k_\star^{(2)} \approx 1.867$. In this case we again observe that the laser and CPA solutions are odd and even functions of $x$, respectively. Their shapes are illustrated in figure~\ref{fig:erf}(g,h,i,j). Further increase of $\gamma$ leads to the   next spectral singularity at  $\gamma^{(3)}_\star \approx 40.203$ and $k_\star^{(3)} \approx  2.417$. The corresponding solutions (not shown in the paper) bear multiple intensity oscillations and again feature the same parity, with laser and CPA being odd and even functions, respectively. Thus we can conjecture that, apart from the non-self-dual SS $k_{0}$ at any value of $\gamma$,  the potential (\ref{eq:erf}) features a sequence of self-dual SSs with $\gamma_\star^{(1)}<\gamma_\star^{(2)}<\ldots $ and $k_0 <k_\star^{(1)}<k_\star^{(2)}\ldots $. Notice that in the presented example, the energy balance integral $\Gamma$ remains constant, i.e. when increasing gain one also has to increase losses.

We have also computed numerically the entire spectrum of eigenvalues of operator $H$ and observed that the increase of $\gamma$ above each self-dual SS leads to the bifurcation of a new complex conjugate pair of eigenvalues from the continuous spectrum. The spectrum is purely real for $0\leq \gamma\leq \gamma_\star^{(1)}$, but features a single complex conjugate pair for $\gamma_\star^{(1)}< \gamma\leq \gamma_\star^{(2)}$, two complex conjugate pairs for $\gamma_\star^{(2)}< \gamma\leq \gamma_\star^{(3)}$, etc.  This is illustrated in Fig.~\ref{fig:erfeig}, where the entire  spectra of eigenvalues   are illustrated for two values of   $\gamma$. Thus the increase of the parameter $\gamma$ above the first CPA-laser threshold $\gamma_\star^{(1)}$ triggers the phase transition from purely real to complex spectrum (such phase transition was discovered numerically in ~\cite{J17} and described analytically in~\cite{KZ17}).
 

 \begin{figure}
 	\centering
 	\includegraphics[width=0.65\columnwidth]{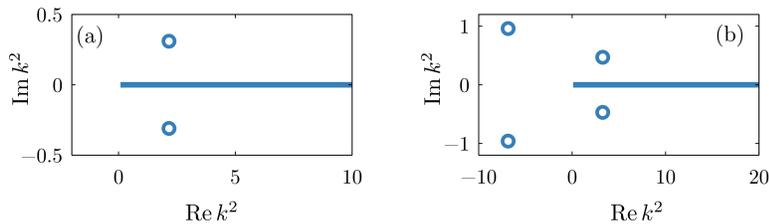}
 	\caption{Spectrum of eigenvalues for the Schr\"odinger operator with the potential (\ref{eq:erf}) for (a) $\gamma=4$ and (b) $\gamma=11.5$. Thick line shows the continuous spectrum and circles indicate the isolated eigenvalues. For $\gamma\leq \gamma_\star^{(1)}\approx 3.468$ the spectrum is purely real and continuous (not shown in the figure).}
 	\label{fig:erfeig}
 \end{figure}

 \section{Discussion and conclusion}
\label{sec:concl}
 
 The main result of this work is that in a quite general physical situation the existence of a SS in the spectrum of a one-dimensional Sch\"odinger operator implies a universal representation of the complex potential, which is given by (\ref{Wadati}). This form of the potential, being a subject of many theoretical studies  in the last decade, is not only a formal algebraic construction but models experimentally feasible potentials, say realized recently in acoustic systems~\cite{experiment}. Complex potentials of this form can be also implemented  in a coherent atomic system driven by laser fields \cite{Chao}.
 
  Functional dependence of the potential (\ref{Wadati}) on only one base function $w(x)$ and on the wavevector $k_0$ at which a SS is observed, allows to 
 engineer 
 complex potentials featuring SSs at a given wavelength. 
 
 We have shown that the corresponding eigenvalue problem has an exact solution (\ref{eq:sol_SS}) which can be either a CPA or laser, both corresponding to real $k_0$. Relaxing the condition of the reality of $k_0$, i.e. considering it in the complex plane, by changing $k_0$ one can transform  the complex potential such that instead of a SS its spectrum can contain bound states, quasi-bound states in continuum, and exceptional points. 

Generically a SS singularity described by the exact solution (\ref{eq:sol_SS}) is simple and non-self-dual. For the particular case when the base function $w(x)$ is real-valued we established that for the exact SS-solution to be a self-dual SS it must be also a second order SS. A numerical example of a potential featuring one simple SS (corresponding to the exact solution) and a set of self-dual SSs was considered in details. We have additionally computed the spectrum of eigenvalues of the corresponding complex potentials and confirmed that a transition through a self-dual spectral singularity generically leads to a bifurcation of a complex-conjugate pair of discrete eigenvalues from  an interior point of the continuous spectrum.
 
As concluding remarks, we mention two straightforward generalizations of the presented theory. First, one can consider base functions $w(x)$ that approach different values at the infinities: $\lim_{x\to\pm\infty} w(x) = k_0^\pm$. In particular, one can construct  ``one-sided'' spectral singularities, i.e. a ``hybrid'' between  SS at one infinite and bound state at another infinity. However, in this case the potential $U(x)$ is not localized, unless $k_0^+ = \pm k_0^-$.
 
Second, for real-valued functions $w(x)$ the exact solution (\ref{eq:sol_SS}) can be straightforwardly generalized   on the nonlinear case. Indeed, incorporating in our model the cubic (Kerr) nonlinearity,  instead of (\ref{eq:Sch}) we obtain the nonlinear eigenvalue problem
 \begin{equation}
 \label{NLS}
 -\psi_{xx} + (-w^2 -iw_x + k_0^2)\psi + \sigma |\psi|^2\psi= k^2\psi.
 \end{equation}
 General properties of this equation  were discussed in \cite{KZ14} and solution of the type (\ref{eq:sol_SS}), although not a SS-solution, was addressed in~\cite{Musslimani}. 
 The exact SS-solution $\psi_0(x)$ of (\ref{NLS}) can be found in the same form as in Eq.~ (\ref{eq:sol_SS}), with the simple shift of the nonlinear eigenvalue:  $k^2 = k_0^2 + \sigma |\rho|^2$. Since  the found nonlinear solution features purely outgoing or purely incoming (depending on the sign of $k_0$) wave boundary conditions,  it paves the way towards the implementation  of a CPA for  nonlinear waves which was experimentally implemented in a Bose-Einstein condensate~\cite{nonlin} and theoretically predicted for arrays of optical waveguides~\cite{ZOK}.




\section*{Acknowledgments}  
 	V.V.K. is grateful to Stefan Rotter for fruitful discussions. Work of D.A.Z. is supported by Russian Foundation for Basic Research (RFBR)
 	(19-02-00193$\backslash$19)

\section*{References}

\end{document}